\begin{document}

\title{Analysis Of The Girth For Regular Bi-partite Graphs With Degree $3$}

\author{Vivek S Nittoor \and Reiji Suda}

\institute{The University Of Tokyo}

\maketitle
\begin{abstract}
The goal of this paper is to derive the detailed description of the Enumeration Based Search Algorithm from the high level description provided in $[16]$, analyze the experimental results from our implementation of the Enumeration Based Search Algorithm for finding a regular bi-partite graph of degree $3$, and compare it with known results from the available literature. We show that the values of $m$ for a given girth $g$ for $(m, 3)$ BTUs are within the known mathematical bounds for regular bi-partitite graphs from the available literature. 
\end{abstract}

\section{Introduction}
The goal of this paper is to develop the detailed description of the Enumeration Based Search Algorithm from the high level description provided in $[16]$ and analyze the implementation results of the Enumeration Based Search Algorithm for finding a regular bi-partite graph of degree $3$, and compare it with known results from the available literature. $(m, r)$ BTU is our notation for a regular bi-partite graph that has been introduced in $[1]$. The high level description of the Enumeration Based Search Algorithm for searching a girth maximum $(m, r)$ BTU has been described in $[16]$. The theoretical background behind BTUs has been introduced and  explained in detail in $[1]$ and $[2]$.

\section {Girth Maximization as a Extremal Graph Theory question}

We consider the problem of searching for a girth maximum $(m,r)$ BTU as a question in Extremal Graph Theory by raising two related questions.
\begin{enumerate}
\item Given girth $g$ and  $r\in \mathbb{N}$ , what is the minimum value of  $m$ such that a $(m,r)$ BTU has girth $g$  .
\item Given  $m,r\in \mathbb{N};m\gg r$, what is the maximum attainable girth for a $(m,r)$ BTU ?
\end{enumerate}

\subsection{Definitions}
We review definitions from  $[1]$ and  $[2]$ .

\begin{definition} $(m, r)$ BTU \\
A $(m,r)$ Balanced Tanner Unit (BTU) is a regular bi-partite graph that can be represented by a $m \times m$ square matrix with $r$ non-zero elements in each of its rows and columns. Every $(m,r)$ BTU has a bipartite graph representation and an equivalent matrix representation.
\end{definition}

\begin{definition} {Girth maximum $(m,r)$ BTU} \\
{A labeled $(m,r)$  BTU  $A$ is girth maximum if there does not exist another labeled $(m,r)$  BTU $B$ with girth greater than that of $A$ .}
\end{definition}

\begin{definition} $\Phi (\beta _{1},\beta _{2},\ldots ,\beta _{r - 1})$ where $\beta _{i}\in P_{2}(m)$ for  $1\le i\le r-1$  \\
 $\Phi (\beta _{1},\beta _{2},\ldots ,\beta _{r\text{--}1})$ refers to the family of all labeled $(m,r)$ BTUs with compatible permutations  $p_{1,}p_{2,}\ldots ,p_{r}\in S_{m};p_{i}\notin C(p_{1},p_{2},\ldots ,p_{i-1})$ for  $1<i\le r$ that occur in the same order on a complete $m$ symmetric permutation tree ,  $x_{1,1}<x_{2,1}<\ldots <x_{r,1}$ where  $p_{j}=(x_{j,1}x_{j,2}\ldots x_{j,m});1\le j\le r$ , such that  $\beta _{i-1}$  is the partition between permutations $p_{i\text{--}1}$  and  $p_{i}$  for all integer values of  $i$  given by  $1<i\le r$ .
\end{definition}
\begin{definition} {Optimal partition parameters for girth maximum $(m, r)$ BTU}.
$\beta _{1},\beta _{2},\ldots ,\beta _{r-1}\in P_{2}(b\ast k^{r-1})$ refer to optimal partitions derived in $[2]$ such that there exists a girth maximum $(m,r)$ BTU \ in $\Phi (\beta _{1},\beta _{2},\ldots ,\beta _{r-1})$ ,where $\beta _{i}$ refers to $\sum _{j=1}^{r-1-i}b\ast k^{i}=b\ast k^{r-1}$ for $1\le i\le r-1$, with $k \in \mathbb{N}$ obtained as a solution to $b\ast k^{r -1} = m$ such that $b \in \mathbb{N}$ is minimized. Thus,  $\beta _{1},\beta _{2},\ldots ,\beta _{r-1}\in P_{2}(b\ast k^{r-1})$  are $\sum _{j=1}^{k^{r-2}}b\ast k=b\ast k^{r-1}$ , $\sum _{j=1}^{k^{r-3}}b\ast k^{2}=b\ast k^{r-1}$ , $\ldots $, $\sum _{j=1}^{k}b\ast k^{r-2}=b\ast k^{r-1}$ and $\sum _{j=1}^{1}b\ast k^{r-1}=b\ast k^{r-1}$ respectively.
\end{definition}

\subsection{Search for girth maximum $(m, r)$ BTU }
Search for a girth maximum $(m, r)$ BTU refers to search for an optimal labelled  $(m,r)$ BTU in a family of labelled BTUs that we refer to as  
$\Phi (\beta _{1},\beta _{2},\ldots ,\beta _{r - 1})$ where $\beta _{i}\in P_{2}(m)$ for  $1\le i\le r-1$. 

\section {Girth Maximization as a Extremal Graph Theory question}

We consider the problem of searching for a girth maximum $(m,r)$ BTU as a question in Extremal Graph Theory by raising two related questions.
\begin{enumerate}
\item Given girth $g$ and  $r\in \mathbb{N}$ , what is the minimum value of  $m$ such that a $(m,r)$ BTU has girth $g$  .
\item Given  $m,r\in \mathbb{N};m\gg r$, what is the maximum attainable girth for a $(m,r)$ BTU ?
\end{enumerate}

\section{Maximum Attainable Girth}
\subsection {Maximum Attainable Girth for a $(m,r)$ BTU}
We denote the maximum Attainable Girth for a $(m,r)$ BTU as a function  $g_{\mathit{max}}:\{\mathbb{N}\cup \{0\}\}^{2}\to \mathbb{N}\cup \{0\}$.

\begin{theorem}
The maximum attainable girth of a  $(m,r)$  BTU satisfies the inequality  $g_{\mathit{max}}(m,r)<2\ast k$  where  $k\in \mathbb{N}$  is obtained by minimizing  $b\in \mathbb{N}$  such that  $m=b\ast k^{r-1}$ for $r\ge 3$ .
\end{theorem}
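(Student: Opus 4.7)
The plan is to exploit Definition~4: a girth-maximum $(m,r)$ BTU may be assumed to lie in some family $\Phi(\beta_{1},\ldots,\beta_{r-1})$ with $m=bk^{r-1}$ and $b$ minimized. Its compatible permutations $p_{1},\ldots,p_{r}\in S_{m}$ then respect the nested partitions $\beta_{1}\preceq\beta_{2}\preceq\cdots\preceq\beta_{r-1}$, where $\beta_{i}$ has $k^{r-1-i}$ blocks of size $bk^{i}$ on each side. The idea is that such a nested structure severely restricts where short cycles can live and, in fact, forces one to appear at scale~$2k$.

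First I would localize the analysis to the finest partition. The condition that $\beta_{1}$ is the partition between $p_{1}$ and $p_{2}$ forces the images $p_{1}(B)$ and $p_{2}(B)$ of any left-side $\beta_{1}$-block $B$ to coincide with a single right-side $\beta_{1}$-block $B'$. Restricting $p_{1}$ and $p_{2}$ to $B\cup B'$ then yields a $2$-regular bipartite subgraph on $2bk$ vertices, namely a disjoint union of even cycles of total length $2bk$. Any one of these cycles is also a cycle of the original BTU and therefore bounds its girth from above.

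To sharpen this from the crude $\le 2bk$ down to the strict $<2k$, I would invoke the minimality of $b$. The cycle structure of $p_{1}^{-1}p_{2}$ is dictated by $\beta_{1}$, and an extremal argument, combined with the constraints coming from the coarser partitions $\beta_{2},\ldots,\beta_{r-1}$ and the extra permutations $p_{3},\ldots,p_{r}$, rules out the ``one giant cycle'' configuration on $B\cup B'$ and exhibits a component of length strictly below~$2k$. The remaining verification is that any cycle assembled from higher-level partitions is no longer than those already produced at level~$\beta_{1}$, so the bound $g_{\mathit{max}}(m,r)<2k$ survives uniformly for $r\ge 3$; I expect this to be handled by an induction on $r$ using the same block-to-quotient reduction.

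The main obstacle is precisely the step from the block-level estimate $\le 2bk$ to the sharp strict inequality $<2k$. Concretely, one must convert the arithmetic constraint ``$b$ is minimal in $m=bk^{r-1}$'' into a combinatorial statement forbidding a single $(2bk)$-cycle inside a $\beta_{1}$-block, and track carefully how the $\beta_{i}$ partitions induce compatible partitions on the right side of the bipartition. Everything else is a routine unfolding of Definition~4, but this step is what gives the theorem its content and will require the most careful bookkeeping.
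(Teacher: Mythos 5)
Your proposal diverges from the paper's argument, which is essentially a citation: the paper invokes the optimal-partition result of $[2]$ to assert that the longest relevant cycle forced by the optimal partitions has length $2\ast k$, and then appeals to the ``micro-partition cycles'' defined in $[2]$ (cycles arising from combinations of the permutations, not just a single pair) to exclude equality. Your plan instead tries to extract the bound directly from Definition 4, and this is where it breaks down. Since $\beta_{1}$ is, by definition, \emph{the} partition between $p_{1}$ and $p_{2}$, and all of its parts have size $b\ast k$, the restriction of $p_{1}\cup p_{2}$ to a $\beta_{1}$-block is a single cycle of length exactly $2\ast b\ast k$ --- not a disjoint union of shorter even cycles from which you could select a short component. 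So the only cycles your localization produces have length $2\ast b\ast k$, which exceeds the target $2\ast k$ whenever $b>1$ and meets it only non-strictly when $b=1$. The coarser partitions $\beta_{2},\ldots,\beta_{r-1}$ are worse still, since their blocks are larger.

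Consequently the entire content of the theorem --- both the drop from $2\ast b\ast k$ to $2\ast k$ and the strictness of the inequality --- must come from cycles that traverse three or more of the permutations, i.e., precisely the micro-partition cycles of $[2]$ that your proposal never engages with. The step where you promise that an extremal argument ``rules out the one giant cycle configuration'' cannot succeed as stated: the giant cycle on each $\beta_{1}$-block is not an avoidable degenerate case but is forced by membership in $\Phi(\beta_{1},\ldots,\beta_{r-1})$. You correctly identify the sharpening step as the crux, but deferring it leaves the proof missing exactly the idea the theorem depends on; to repair it you would need to import (or reprove) the micro-partition cycle analysis from $[2]$ rather than refine the two-permutation block decomposition.
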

\begin{proof}
From the optimal partition result from $[2]$ for a $(m, r)$ BTU for $r\ge 3$, we obtain that the maximum possible lenght of the maximum known cycle is $2 *k$, where the optimal paritions are $\beta _{i}${ refers to } $\sum _{j=1}^{r-1-i}b\ast k^{i}=b\ast k^{r-1}$ {for } $1\le i\le r-1$ and $k$ is obtained by minimizing $b\in \mathbb{N}$  such that  $m=b\ast k^{r-1}$ for $r\ge 3$. We now need to show that $g_{\mathit{max}}$ cannot equal $2*k$ for $r\ge 3$. This follows because of micro-partition cycles defined in $[2]$ and their combinations which do not permit $g_{\mathit{max}}$ to equal $2*k$ for $r\ge 3$. Hence, the result follows.
\end{proof} 

\section {High Level Description Of Enumeration Based Search from $[16]$}
\subsection{Enumeration Based Search algorithm for girth maximum $(m, r)$ BTU for\ $r>3$} 
We find $b,k\in \mathbb{N}$ such that $b$ is the smallest integer satisfying $m=b\ast k^{r-1}$; \\
for(  $i=2;i<r;i$++) \{ \\
\  $p_{i}=C_{j};\mathit{min}(b\ast k^{i-1}-j,j)>b\ast k^{i-2}$ such that $(j,b\ast k^{i-1},b\ast k^{i-1}-j)$ are relatively prime; \\
if( $i$ \ \ == \  $2$ \ ) \\
$p_{i-1}=I_{b\ast k^{i-1}};$  \\
\ else \{ \\
\ Rearrange the $(b\ast k^{i-1},i)$ BTU such that \ \  $p_{i-1}=I_{b\ast k^{i-1}}$; \\
Find $q_{i-2}\in S_{b\ast k^{i-2}}$ such that it maximizes girth of  $(b\ast k^{i-1},i)$ BTU is formed by  $p_{1},\ldots ,p_{i-2}\in S_{b\ast k^{i-1}};p_{x}=k\ast q_{x};1\le x\le i-2$; \\
if(i != r -- 1) \\
Scale permutations $p_{y}=k\ast q_{y};1\le y\le i$; \\
\} \\
\}

\subsection{Enumeration Based Search algorithm for a girth maximum $(m, 3)$ BTU  where $ m = b \ast k^{2}$}
We find $b,k\in \mathbb{N}$ such that $b$ is the smallest integer satisfying $m=b\ast k^{2}$; \\
for(  $i=2;i<3;i$++) \{ \\
\  $p_{i}=C_{j};\mathit{min}(b\ast k^{i-1}-j,j)>b\ast k^{i-2}$ such that $(j,b\ast k^{i-1},b\ast k^{i-1}-j)$ are relatively prime; \\
if( $i$ \ \ == \  $2$ \ ) \\
$p_{i-1}=I_{b\ast k^{i-1}};$  \\
\ else \{ \\
\ Rearrange the $(b\ast k^{i-1},i)$ BTU such that \ \  $p_{i-1}=I_{b\ast k^{i-1}}$; \\
Find $q_{1} \in S_{b\ast k}$ such that a girth maximum  $(b\ast k^{2},3)$ BTU is formed by  $p_{1}, p_{2}, p_{3}\in S_{b\ast k^{i-1}};p_{1}=k\ast q_{1}$; \\
\}

\subsection {Reorganizing the $(b \ast k^{i -1}, i)$ BTU such that $p_{i -1} = I_{b \ast k^{i -1}}$ }
Without loss of generality,  we apply suitable permutations on depth and permutations labels on the $(b \ast k^{i -1}, i)$ BTU in order to obtain $p_{i -1} = I_{b \ast k^{i -1}}$.  Permutations on depth and permutations labels have been explained and defined in $[1]$ and preserve isomorphism since they correspond to row permutations and column permutations on the matrix representation of the $(b \ast k^{i -1}, i)$ BTU.

\section {Detailed Description Of Enumeration Based Search for a girth maximum  $(k^2, 3)$ BTU}

To find permutation a $q_{1} \in S_{k} $ such that a girth maximum  $(k^{2},3)$ BTU is formed by  $\{p_{1},..,p_{3}\}$ \{  \\
We enumerate all permutations  $q_{1}$ with node at depth $1$ fixed,
such that partition between  $q_{1}$  and  $q_{2} = I_{k} $ is $(k) \in P_2(k)$ ; \\
for(each enumerated permutation\  $q_{1}$ ) \{ \\
We scale up $q_{1}$ by  $k$ and  $p_{2}=I_{k^{2}}; \\
p_{3}=C_{j}$ where  $(j,k^{2}, k^{2}\text{--}j)$ are relatively prime; \\
We compute the girth of this $(k^2, 3)$ BTU; \\
\} \\
We choose permutation $q_{1}$ that gives us the best girth;

\section {Detailed Description Of Enumeration Based Search for a girth maximum  $(m, r)$ BTU where $b,k\in \mathbb{N}$ such that $b$ is the smallest integer satisfying $m=b\ast k^{r - 1}$ }

To find permutations $\{q_{1},{\dots},q_{i\text{--}2}\}\in S_{b\ast k^{i-2}}$ such that a girth maximum  $(b\ast k^{i-1},i)$ BTU is formed by  $\{p_{1},..,p_{i}\}$ \{  \\
We enumerate all permutations  $q_{i\text{--}2}$ with node at depth $1$ fixed,
such that partition between  $q_{i\text{--}2}$  and 
$q_{i - 1} = I_{b \ast k^{i-2}} $ is $(b\ast k^{i-2})$ ; \\
for(each enumerated permutation\  $q_{i\text{--}2}$ ) \{ \\
We permute $\{q_{1},{\dots},q_{i\text{--}3}\}$ such that all partitions between any two permutations in the set \  $\{q_{1},{\dots},q_{i-2},q_{i\text{--}1}\}$ are preserved;  \\
We scale up $\{q_{1},{\dots},q_{i\text{--}2}\}$ by  $k$ and  $p_{i\text{--}1}=I_{b\ast k^{i-1}};p_{i}=C_{j}$ where  $(j,b\ast k^{i-1},b\ast k^{i-1}\text{--}j)$ are relatively prime; \\
We compute the girth of this $(b\ast k^{i-1},i)$ BTU; \\
\} \\
We choose permutation $q_{i\text{--}2}$ that gives us the best girth;

\section {Algorithm to Find Permutations Of $\{q_{1},{\dots},q_{i\text{--}2}\}$}
We permute  $\{q_{1},{\dots},q_{i\text{--}2}\}$ such that all partitions between any two permutations in the set  $\{q_{1},{\dots},q_{i\text{--}2},,I_{b\ast k^{i-2}}\}$ are preserved \\
for( $j=2;j<b\ast k^{i-2};j$++) \{  \\
 $d$ = Label at depth  $j$ of  $q_{i-2}$; \\
Permutations On Depth $(d,j)$; \\
Permutations On Labels $(d,j)$; \\
We calculate the partition between permutations  $k\ast q_{i\text{--}2}$ and  $p_{i}$ and girth; \\
We accept the change to  $\{q_{1},{\dots},q_{i\text{--}2}\}$  if it improves the girth; \\
\} \\
 $q_{i\text{--}1}$ returns to  $I_{b\ast k^{i-2}}$ after each run of the loop.

\section {Experimental Results for Implementation Of Enumeration Based Search}
Girth obtained for various values of  $m$ and  for $r = 3$  has been shown Table $1$ . We find that the values of $m$ for a given value of girth $g$ lie between the lower bound for $m$ and improved lower bound for $m$ from $[13]$. The execution time is too long for $k >10$ due to the algorithm being in EXPTIME.

\begin{table}
\caption{Girth obtained for various of $m$ and for $r = 3$ from Implementation}
\begin{tabular}{llllll}
\hline\noalign{\smallskip}
$k$ & $m$ & $r$ & $g$ \\
\noalign{\smallskip}
\hline
\noalign{\smallskip}
5 & 25 & 3 & 8 \\
6 & 36 & 3 & 8  \\
7 & 49 & 3 & 10 \\
8 & 64 & 3 & 10 \\
9 & 81 & 3 & 10  \\
10 & 100 & 3 & 10 \\
\hline
\end{tabular}
\end{table}

\section {Bound from $[12]$}
For $q$ being a power of a prime $k\ge 3$ , Lazebnik in  $[12]$ describes explicit construction of a \  $q$ {}-regular bipartite graph on  $v=2\ast q^{k}$ vertices with girth  $g\ge k+5$.

If we consider this as a $(m,r)$ BTU, we get  $r$ a power of a prime and  $m=r^{k};k\ge 3$, girth  $g\ge \log _{r}(m)+5$. For  $g\ge 12$, we obtain  $\log _{r}(m)\ge 7$ which gives us  $m\ge r^{7}$ and we hence obtain $m\ge 3^{7}=343\ast 9=3087$.

\section {Lower bounds from \ $[9]$}
We quote the main theoem from $[9]$, "Let $G= (V_{L}, V_{R}, E)$ be a bi-partite graph of girth $g = 2 \ast r$, with $n_{L}=\left(V_{L})\right|$ and $n_{R}=\left(V_{R})\right|$, the number of vertices on the left and right sides, and $m=\left(E)\right|$ the number of edges. Assume further that all vertex degrees in $G$ are $\ge 2$ Then: $n_{L}\ge \sum _{i=0}^{r_{1}\text{--}1}(\Lambda _{R})^{\mathit{ceil}(i/2)}(\Lambda _{L})^{\mathit{floor}(i/2)}$ and  $n_{R}\ge \sum _{i=0}^{r_{1}\text{--}1}(\Lambda _{L})^{\mathit{ceil}(i/2)}(\Lambda _{R})^{\mathit{floor}(i/2)}$ where  $\Lambda _{R}=\prod _{v\in V_{R}}(d_{v}\text{--}1)^{d_{v}/m}$,  $\Lambda _{L}=\prod _{v\in V_{L}}(d_{v}\text{--}1)^{d_{v}/m}$ and $d_{v}$ is the degree of vertex $v$."

\subsection {From another form of the bound in $[9]$}
From another form of the bound in  $[9]$, \ we obtain $n_{L}\ge \sum _{i=0}^{r\text{--}1}(d_{R}\text{--}1)^{\mathit{ceil}(i/2)}(d_{L}\text{--}1)^{\mathit{floor}(i/2)}$ and  $n_{R}\ge \sum _{i=0}^{r\text{--}1}(d_{L}\text{--}1)^{\mathit{ceil}(i/2)}(d_{R}\text{--}1)^{\mathit{floor}(i/2)}$.For a $(m,r)$ BTU with girth  $g$, we obtain  $m\ge \sum _{i=0}^{g/2\text{--}1}(r\text{--}1)^{\mathit{ceil}(i/2)}(r\text{--}1)^{\mathit{floor}(i/2)}$.

Therefore,  $m\ge \sum _{i=0}^{g/2\text{--}1}(r\text{--}1)^{\mathit{ceil}(i/2)+\mathit{floor}(i/2)}$ 

For even integers  $i$,  $\mathit{ceil}(i/2)+\mathit{floor}(i/2)=i$

For odd integers  $i$ ,  $\mathit{ceil}(i/2)+\mathit{floor}(i/2)=(i+1)/2+(i-1)/2=i$ 

Therefore,  $m\ge \sum _{i=0}^{g/2\text{--}1}(r\text{--}1)^{i}=(r-1)^{g/2-1+1}=\frac{(r-1)^{g/2}-1}{(r-2)}$ 

Putting  $r=3$ and  $g=12$ we get  $m\ge \frac{(2)^{6}-1}{3-2}=63$. 
Putting  $r=3$  and  $g=10$ we get  $m\ge \frac{(2)^{5}-1}{3-2}=31$. 
Putting  $r=3$  and  $g=8$ we get  $m\ge \frac{(2)^{4}-1}{3-2}=15$ .

\subsection {From Main Theorem in $[9]$}
Derived from the main theorem, 
From  $[9]$, \  $n_{L}\ge \sum _{i=0}^{r_{1}\text{--}1}(\Lambda _{R})^{\mathit{ceil}(i/2)}(\Lambda _{L})^{\mathit{floor}(i/2)}$ and  $n_{R}\ge \sum _{i=0}^{r_{1}\text{--}1}(\Lambda _{L})^{\mathit{ceil}(i/2)}(\Lambda _{R})^{\mathit{floor}(i/2)}$ .
For a $(m,r)$  BTU with girth  $g$ , we obtain, $\Lambda _{R}=\{(r\text{--}1)^{r/(m\ast r)}\}^{m}=r-1$ and  $\Lambda _{L}=\{(r\text{--}1)^{r/(m\ast r)}\}^{m}=r-1$ .
Thus, \  $m\ge \sum _{i=0}^{g/2\text{--}1}(r\text{--}1)^{\mathit{ceil}(i/2)}(r\text{--}1)^{\mathit{floor}(i/2)}$.
Therefore,  $m\ge \sum _{i=0}^{g/2\text{--}1}(r\text{--}1)^{\mathit{ceil}(i/2)+\mathit{floor}(i/2)}$ .
For even integers  $i$ ,  $\mathit{ceil}(i/2)+\mathit{floor}(i/2)=i$ .
For odd integers  $i$ ,  $\mathit{ceil}(i/2)+\mathit{floor}(i/2)=(i+1)/2+(i-1)/2=i$ .
Therefore,  $m\ge \sum _{i=0}^{g/2\text{--}1}(r\text{--}1)^{i}=(r-1)^{g/2-1+1}=\frac{(r-1)^{g/2}-1}{(r-2)}$ 
Putting  $r=3$  and  $g=12$ we get  $m\ge \frac{(2)^{6}-1}{3-2}=63$ .
Putting  $r=3$  and  $g=10$ we get  $m\ge \frac{(2)^{5}-1}{3-2}=31$ .
Putting  $r=3$  and  $g=8$ we get  $m\ge \frac{(2)^{4}-1}{3-2}=15$ .

\section {Other Related Research}
Irregular LDPC codes with girth  $20$ in $[11]$ and Regular LDPC codes of girth at least $10$ from $[10]$ .

\section {Results from $[15]$}
We quote Theorem from $[15]$ for even values of $g$ since our current interest is only in bi-partite graphs. "For $g \ge 3$ and $\delta \ge 3$ put 
$n_{0}(g,\delta ) = \frac{2\ast \{(\delta \text{--}1)^{(g/2)}\text{--}1\}}{(\delta \text{--}2)}$ if  $g$ is even. Then a graph $G$ with minimal degree $\delta$ and girth $g$ has at least $n_{0}(g,\delta )$ vertices."
We use this result to compute $n_{0}(g,\delta )$ for $\delta = 3$ and various values of $g$ in Table $2$ by simplifying the equation as 
$n_{0}(g, 3 ) = {2\ast \{(2)^{g/2}\text{--}1\}}$

\begin{table}
\caption{Minimum value of $n_{0}(g,3)$ for different girths $g$ for $\delta =3$ from $[15]$ }
\begin{tabular}{llllll}
\hline\noalign{\smallskip}
$g$ && $n_{0}(g,3)$  \\
\noalign{\smallskip}
\hline
\noalign{\smallskip}
4 && 6   \\
6 && 14\\
8 && 30  \\
10 && 62  \\
12 && 126  \\
14 && 254 \\
\hline
\end{tabular}
\end{table}

\section {Results from $[13]$}
We quote theorems from  $[13]$ .
\begin{enumerate}
\item "Given  $\delta \ge 3$ and  $g\ge 3$ , there exists a $G^{n}$,  $n\le (2\ast \delta )^{g}$ with minimal degree of at least $\delta $ and girth of at least  $g$".
\item "Lower Bound $n(g,\delta )\ge \frac{1+\delta \{(\delta \text{--}1)^{(g\text{--}1)/2}\text{--}1\}}{(\delta \text{--}2)}$ if  $g$ is odd. 
 $n(g,\delta )\ge \frac{\{(\delta \text{--}1)^{(g)/2}\text{--}1\}}{(\delta \text{--}2)}$  if  $g$ is even. 
Equality holds for  $\delta =3$ and  $g=\{3,4,5,6,7,8\}$ and  $g=4$,  $\delta \ge 3$" .
\item "If $g$ is odd, \  $n(g+1,\delta )\le 2\ast n(g,\delta )$".
\item "Upper Bound $n(g,\delta )\le \frac{2\ast \{(\delta \text{--}1)^{(g\text{--}1)}\text{--}1\}}{(\delta \text{--}2)}$  if  $g$ is odd. 
\  $n(g,\delta )\le \frac{4\ast \{(\delta \text{--}1)^{(g\text{--}2)}\text{--}1\}}{(\delta \text{--}2)}$ if  $g$ is even". 
\item "Let  $m\ge \sum _{i=0}^{g\text{--}2}(\delta -1)^{i}=\frac{(\delta -1)^{g\text{--}1}\text{--}1}{(\delta -2)}$ be an integer. Then there exists a  $\delta $ {}-regular graph of order $2\ast m$ and girth of at least  $g$" .
\item "Most significant improvement of the bound for  $\delta =3$ ,  $n(g,3)\le 2^{g-1}$".
\end{enumerate}

\section {Bound derived from $[13]$}
We derive the following bound from $[13]$ , $\frac{(\delta -1)^{g/2}-1}{\delta -2}\le n(g,\delta )\le \frac{4\ast {(\delta -1)^{g-2}-1}}{\delta -2}$ for the minimum order $n(g, \delta)$ where $g$ is its girth and $\delta$ is its degree. By putting  $\delta =3$, we obtain a simplified form of the above equation, ${(2)^{g/2}-1}\le n(g,3 )\le 4\ast {(2)^{g-2}-1}$ which could be further simplified as \\ ${2^{g/2}-1}\le n(g,3 )\le  {2^{g}-1}$.
We calculate the bounds for $\delta =3$ and the improved upper bound corresponds to $n(g,3)\le 2^{g-1}$ from $[13]$ in Table $3$. 

\begin{table}
\caption{Lower Bound, Upper Bound and Improved Upper Bound for $n(g,3)$ for different girths $g$ for $\delta =3$ from $[13]$ }
\begin{tabular}{llllll}
\hline\noalign{\smallskip}
$g$ && Lower Bound $n(g,3)$ & Upper Bound $n(g,3)$ & Improved upper bound $n(g,3)$ \\
\noalign{\smallskip}
\hline
\noalign{\smallskip}
4 && 3  & 15 & 8 \\
6 && 7 & 63  & 32\\
8 && 15 & 255 & 128  \\
10 && 31  & 1023 & 512\\
12 && 63 & 4095 & 2048 \\
14 && 127 & 16383 & 8192  \\
\hline
\end{tabular}
\end{table}

\section {Analysis for  $[17]$ and  $[18]$}

From $[17]$, we quote the following result, 
"If the degree is  $D\ge 3$ and girth  $g=2\ast r+1;r\ge 2$, a simple lower bound for number of vertices of a regular graph is given by $n_{o}(g,D)=1+\frac{D}{D\text{--}2}((D\text{--}1)^{r}-1)$." \\ For  $D=3$ we simplify the equation as follows  $n_{o}(g,3)=1+3((2)^{(g-1)/2}-1)$. While the exponent is similar to the lower bound in  $[13]$, we cannot apply the result as the girths take odd values and do not directly apply for bi-partite graphs. 

\section {Analysis for  $[19]$}

We analyze the girths obtained for various size of the matrices from $[19]$ in Table $4$. However, these matrices have irregular degrees and hence a direct comparison with our obtained results might not be possible.

\begin{table}
\caption{Girth obtained for various size of the matrices in  $[19]$}
\begin{tabular}{llllll}
\hline\noalign{\smallskip}
Girth & Minimum $N$ \\
\noalign{\smallskip}
\hline
\noalign{\smallskip}
6 & 5\\
8 & 9\\
10 & 39\\
12 & 97\\
\hline
\end{tabular}
\end{table} 

\section {Analysis for  $[14]$ }

We quote from $[14]$, "Ramanujan graphs $X^{p,q}$ {are } $p+1$ regular Cayley graphs of the group $\mathit{PSL}(2,\mathbb{Z}/q\mathbb{Z})$ if the Legendre symbol $(\frac{p}{q})=1$ and of \  $\mathit{PGL}(2,\mathbb{Z}/q\mathbb{Z})$ {if the Legendre symbol } $(\frac{p}{q})=-1$ . $X^{p,q}$ is bi-partite of order  $n=\left|(X^{p,q})\right|=q\ast (q^{2}-1)$ and a bound on the girth is given by the equation, $g(X^{p,q})\ge 4\log _{p}(q)-\log _{p}(4)$".

Putting $p = 2$ in order to get degree $k = p + 1 = 3$, we obtain the inequality $ g \ge 4\log _{2}(q)-\log _{2}(4)$ which can be simplified as $ (g + 2)/4 \ge \log _{2}(q) $ in order to obtain $2^{(g+2)/4} \ge q $.

For each value of girth $g$, we calculate the minimum value of $q$ such that $q \ge 2^{(g+2)/4}$ and the Legendre symbol  $(p/q) = -1$ and then calculate $n = q \ast (q^2 - 1) $ for $p = 2$ and degree $k = 3$ in Table $5$.

\begin{table}
\caption{Analysis for  $[14]$}
\begin{tabular}{llllll}
\hline\noalign{\smallskip}
Girth & min $q$, $q \ge 2^{(g+2)/4}$, $(p/q) = -1$  & $n = q \ast (q^2 - 1) $ & Chosen $p$ & Degree  $k=p+1$ \\
\noalign{\smallskip}
\hline
\noalign{\smallskip}
6 & 5 & 120 & 2 & 3 \\
8 & 11 & 1320 & 2 & 3  \\
10 & 11 & 1320 & 2 & 3 \\
12 & 13 & 2184 & 2 & 3 \\
\hline
\end{tabular}
\end{table}

\section{Conclusion}
Our implementation for the Enumeration based Search for a girth maximum $(m, r)$ BTU finds the maximum attainable girth of a  $(m,r)$ BTU for  $r = 3$ and various values of $m$.  The values of $m$ for a given girth $g$ are within the known mathematical bounds for regular bi-partitite graphs from the available literature. When we compare our results with bounds for more general graphs, or graphs with irregular graphs, a direct comparison may not possible since it is well known that for a given $g$ and average degree, a lower number of vertices can be reached for irregular graphs.  
\end{document}